\theoremstyle{definition}
\newtheorem{definition}{Definition}[section]
\newcommand{\mytablespacing}{1}
\newcommand{\editage}{black}
\newcommand{\editageTwo}{black}
\newcommand{\revisionTwo}{black}
\newcommand{\revisionThree}{black}
\newcommand{\anna}{black}
\newtheorem{theorem}{Theorem}[section]
\begin{document}
\firstpage{1}

\subtitle{}

\title[AnomiGAN]{AnomiGAN: Generative adversarial networks for anonymizing private medical data}
\author[Bae \textit{et~al}.]{Ho Bae\,$^{\text{\sf 1}}$ Dahuin Jung\,$^{\text{\sf 2}}$ and Sungroh Yoon\,$^{\text{\sf1,\sf 2,3,}*}$}
\address{$^{\text{\sf 1}}$Interdisciplinary Program in Bioinformatics, Seoul National University, Seoul 08826, Korea and\\ $^{\text{\sf 2}}$Electrical and Computer Engineering, Seoul National University, Seoul 08826, Korea \\
$^{\text{\sf 3}}$Biological Sciences, Seoul National University, Seoul 08826, Korea
}
\corresp{$^\ast$To whom correspondence should be addressed.}
\history{Received on XXXXX; revised on XXXXX; accepted on XXXXX}
\editor{Associate Editor: XXXXXXX}

\abstract{
	Typical personal medical data contains sensitive information about individuals. Storing or sharing the personal medical data is thus often risky. For example, a short DNA sequence can provide information that can not only identify an individual, but also his or her relatives. Nonetheless, most countries and researchers agree on the necessity of collecting personal medical data. This stems from the fact that medical data, including genomic data, are an indispensable resource for further research and development regarding disease prevention and treatment. To prevent personal medical data from being misused,  techniques to reliably preserve sensitive information should be developed for real world application. In this paper, we propose a framework called anonymized generative adversarial networks (AnomiGAN), to improve the maintenance of privacy of personal medical data, while also maintaining high prediction performance. We compared our method to state-of-the-art techniques and observed that our method preserves the same level of privacy as differential privacy (DP), but had better prediction results. We also observed that there is a trade-off between privacy and performance results depending on the degree of preservation of the original data. Here, we provide a mathematical overview of our proposed model and demonstrate its validation using UCI machine learning repository datasets in order to highlight its utility in practice. Experimentally, our approach delivers a better performance compared to that of the DP approach.

	\textbf{Availability:} Implementation is available at \href{ http://data.snu.ac.kr/pub/XXX }{ https://github.com/hobae/AnomiGAN/ \\}
	\textbf{Contact:} \href{sryoon@snu.ac.kr}{sryoon@snu.ac.kr}
	}

\maketitle

\begin{methods}
    \section{Introduction}
    To restrain the use of personal data for illegal practices, the right to privacy has been introduced and is being adaptively amended. The right to privacy of medical data should be enforced because each individual's genetic information is static; therefore, a leak of such information would be very dangerous. Genetic markers, which are short DNA sequences, constitute a very sensitive piece of information. Using a genetic marker, it is feasible to uniquely identify individuals and their relatives. If proper security toward genetic information is not achieved, there is a risk of genetic discrimination such as denial of insurance (e.g., denial of insurance) or blackmail (e.g., planting fake evidence at crime scenes)~\citep{wagner2018technical}.

The advent of next-generation sequencing technology has progressed DNA sequencing at an unprecedented rate, thereby enabling impressive scientific achievements~\citep{schuster2007next}.
Using information gathered from the Human Genome Project, an international effort has been made to identify the hereditary component of the disease, which will allow for earlier detection and more effective treatment strategies~\citep{collins2001human}.

Data sharing between medical institutions is essential for the development of novel treatments for rare genetic diseases. Researchers are capable of identifying similar occurrence patterns of certain rare diseases on the basis of shared, more general data~\citep{weir2004stored}.
Therefore, seamless progress in genomic research largely depends on the ability to share data among different institutions~\citep{oprisanu2018anonimme}. 
\textcolor{\anna}{A public database GeneBank has been generated by international contributors under 
and is known as the Human Genome Project.}
\textcolor{\anna}{Various institutions are responsible for collecting health and genetic data in several countries.}
In the US, Research Program was launched to collect health and genetic data in 2015. In the UK, Genomics England sequenced the genomes of 100,000 patients with rare diseases and cancer, and the NIH's Genomic Data Commons (GDC) serves as a unified data repository from the cancer research community (https://gdc.cancer.gov/).

Patients portals and telehealth programs have gained popularity to patients allowing them to interact with their healthcare system by online health services~\citep{crotty2016designing}. Although these online health services provide convenience since patients can order prescriptions at home or remotely, patients are required to transmit their private data over the Internet.
\textcolor{\anna}{Most health services follow the guidelines of accountability act of 1996 (HIPPA\footnote{*The HIPAA statute that Genetic Data, by definition linked to an identifiable person, should not be disclosed or made accessible to third parties, in particular, employers, insurance companies, educational institutions, or government agencies, except as required by law or with the separate express consent of the person concerned.}\label{fn:HIPPA}) to protect patient records, but these guidelines may not be upheld when data is shared to a third party.}

Online health services are extremely useful tools, but have introduced vulnerabilities related to privacy issues. There are two significant privacy threats involved in online health services: a) hijacking during transmission and b) privacy leak on storage. To encourage the use of online health services and the provision of sensitive genetic information, a strong privacy shield should be guaranteed for all users and donors, both during transmission and storage.

There are two approaches to improve protection of personal medical data: 1) statistical-based anonymization and 2) encryption. The statistical-based approach relies on strong assumptions of the background population~\citep{sean2019}. Differential privacy (DP)~\citep{dwork2011differential} is a state-of-the-art method used to provide strong privacy guarantees. \textcolor{\anna}{In addition to DP,} a number of methods have been suggested to enable the sharing of aggregate personal medical data while preserving participants privacy~\citep{erlich2014routes,homer2008resolving,zhou2011release,sankararaman2009genomic,simmons2015one}.
However, the first approach is limited to a few genomic loci, which can lead to inaccuracies when the size of the genomic loci is scaled up~\citep{simmons2016realizing,simmons2016enabling}.

Second, cryptography-based methods (homomorphic encryption) enables the computation of encrypted data via simple operations, such as summation and multiplication. The property of homomorphic encryption allows for sharing of \textcolor{\anna}{personal medical} data with accurate results. 
However, the latency of the computation is in the order of hundreds of seconds~\citep{gilad2016cryptonets}. In addition, a simple operation of homomorphic properties limits adaptation to a complex model such as neural networks~\citep{bae2018security}.

As a number of deep learning-based disease prediction tools have been developed and demonstrate outstanding prediction results~\citep{min2017deep}, the deep learning techniques are extended to design privacy-preserving deep neural networks~\citep{gilad2016cryptonets, hesamifard2017cryptodl, sanyal2018tapas,kim2016collaborative}.
For example, cryptoNets~\citep{gilad2016cryptonets} apply neural networks to infer encrypted data such as DNA sequences. However, the accuracy and efficiency of this model are very low because the activation functions are replaced by non-polynomial activation functions and the converted precision of the weights~\citep{bae2018security}. Recently, Sanyal et al. modified CryptoNets in a way that allows for parallelization~\citep{sanyal2018tapas}.

The first approach to privacy preserved sharing does not provide a computational overhead, but this method has a significant trade-off between privacy and accuracy. The second approach guarantees prediction accuracy while preserving privacy, but contains a bottleneck for the server-side computation-complexity. To address the aforementioned issues, we propose a method based on generative adversarial networks (GANs) that preserves similar prediction performance.

\subsection{Problem Statement}
We focus on privacy issues related two types of adversaries that may be encountered
through the use of online health services: active and passive adversaries. The active adversary will target an individuals private data while users are interacting on the fly with the service, and the passive adversary will target any information that is stored via online or offline services. 
For the passive adversary, two major privacy issues may be involved: a) a private leak 
occurring because of the level of security of the service system, and b) upon patient's consent for the use of their medical information for the research purposes. The patients record may be propagated to a third party with minimum anonymization following the deidentification guideline~\citep{berhane2012concepts}.

 \begin{figure*}
\centering
\includegraphics[width=0.795\textwidth]{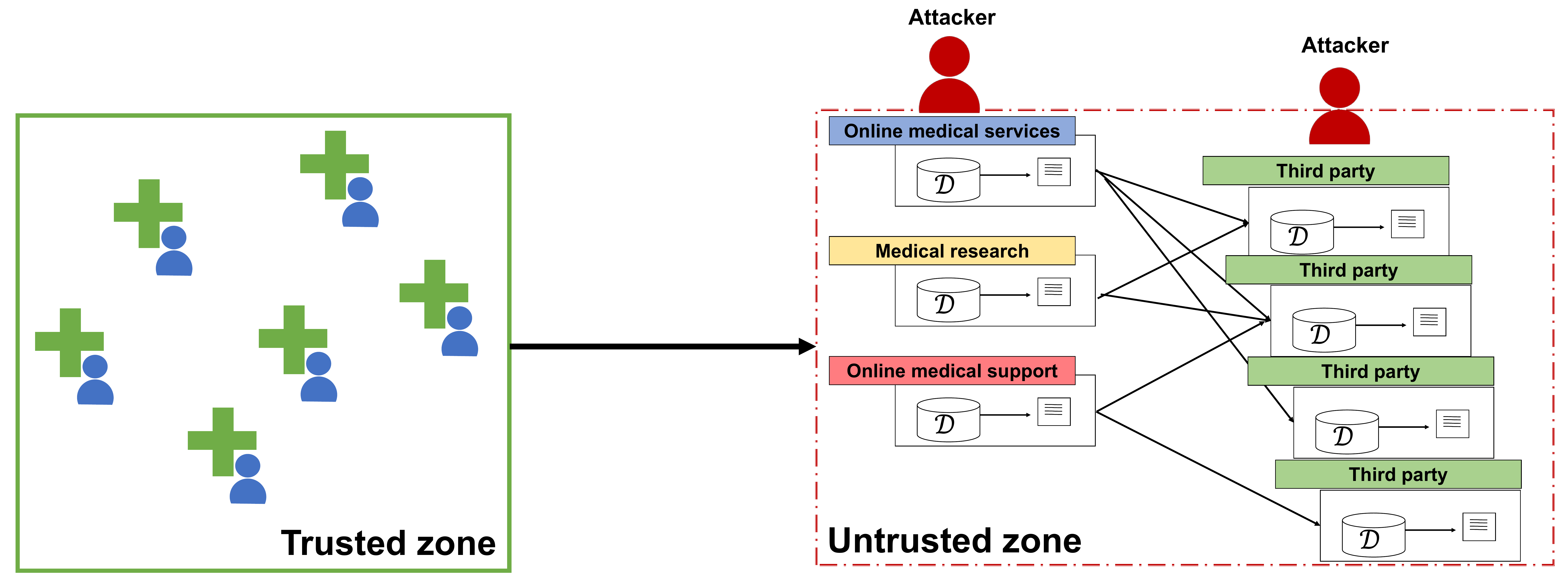}
\caption{The scenario of this study. The scenario consists of a trusted zone and an untrusted zone; the untrusted zone can be further divided into two groups. The first group is comprised of online medical services, and the second group includes third parties (Google, Dropbox, and Amazon). User's medical data is transferred to the online medical service and the services provide diagnosis results to the user. Upon user consent for data sharing, the user's data may be propagated to the third parties.}
\label{fig:overview}
\end{figure*} 

\subsection{Solution Intuition}
Our proposed framework, AnomiGAN, allows a user to control the anonymization level. For strong anonymization, one parameter that controls the privacy level can be minimized to zero.
Then, the model will anonymize the data purely depending on a target classifier that guarantees the same prediction results as those for original data while preserving participant's privacy.

AnomiGAN also provides functionality that mitigates the level of confidentiality. This functionality can be used to ensure minimal anonymity when sharing data between institutions that follow the same privacy guidelines. The confidence level will depend on the $\delta$ acting like the $\delta$-differential privacy method~\citep{dwork2013toward}. 

Unlike other deep neural networks, AnomiGAN's design is a non-deterministic model developed by adding variance which is estimated by the parameters that are stored during the training process to the random layer of the trained model.

\subsection{Contributions}
In this paper, we propose the anonymous generative adversarial networks to protect an individual's disease information from the institutions with the ability to aggregate data among different institutions. Our framework is a generic method that exploits a target classifier simulating prediction efforts to preserve the original prediction result. 
We explore whether a generative model can be constructed to produce meaningful synthetic data, which simultaneously preserves original information while protecting private disease information.
We evaluated the proposed method on prediction classifiers for two diseases (breast cancer, chronic kidney disease), and found that prediction performance degradation is minimal compared to the original prediction result.
Finally, we compared our proposed method to the state-of-the-art privacy preserving technique, and provide analysis regarding privacy parameter and accuracy.
Our analysis of the anonymous generative adversarial networks reveals that the model becomes unstable if the model employs a non-robust, target classifier.

The remainder of this paper is organized as follows.  In the following section, we define entities, operations and adversary's objectives. In Section 3, we present a mathematical overview of our proposed model with architecture.
In Section 4, we present our experimental results and compare with DP. Finally, we conclude our paper in Section 5.
\end{methods}

\begin{methods}
    \section{Background}

 We assume that the service providers use supervised machine learning classifiers to make predictions on personal medical data. A machine learning-based classifier attempts to find a function $f$ that maps a medical data points into either benign or malicious. Details of the target classifiers are described in Section~\ref{target_classifer}. In the following section, we define the goal and capabilities of adversaries, and their associations to security bound. We provide preliminary security definition in  Section~\ref{threat_model} for the proof of our model described in Section~\ref{model_proof}.

\subsection{Scenario of Our Study}
\figurename~\ref{fig:overview} illustrates the overall workflow of our study. The scenario consists of a trusted zone and an untrusted zone. In the trusted zone, patients have not involved any threats from an adversary.
Within the untrusted zone, the scenario assumes that two groups exist: a) one group that follows the health insurance portability and accountability act of 1996 (HIPPA) guidelines to protect patient's records, and b) the group that follows guidelines applicable to Cloud services, Google, Dropbox, Amazon, etc. Upon the user's consent for data sharing, the user's data may be propagated to third parties. Two possible types of adversaries exist when private information is sent from the trusted zone to an untrusted zone for online health services. 
Private information is a candidate target for an active adversary during transmission, and any portion of an individual's private information that is stored is also a suitable target for a passive adversary.
To prevent any compromise of sensitive information, AnomiGAN can be deployed in the trusted zone to anonymize personal medical data.

\subsection{Adversarial Goal and Capabilities}
It is important to define the capability of a particular adversary to measure a relevant privacy aspect.
The adversarial' goal in our case is to compromise an individual's private medical data including any sensitive information. 
An adversary can be anywhere in both online/offline health services, and any third parties that closely work with medical institutions. 

An adversary often makes an effort to estimate a posterior probability distribution with the resources available for breaching privacy. The available resources can be multiple combinations of computation power, time, bandwidth, or physical nodes~\citep{wagner2018technical}. The adversary's success probability can be quantified with the many trials of adversary's choice of input. The probability is then used to quantify privacy, with low probability correlating to high privacy.

\subsection{Security Bound}
Modern cryptography introduces two relaxations to the notion of perfect security~\citep{lindell2014introduction}. The first notion limits the adversary to the polynomial time adversaries indicating that security is only guaranteed against polynomial-time adversaries. The second notion relaxes the adversary's success probability to allow for a small (negligible) advantage. As such, we have designed our scheme as a probabilistic polynomial-time algorithm, whereby the output of the model must have randomness to prevent an adversary who repeated procedure with same input to observe an information leak. 
We assume that adversary run in polynomial time with additional negligible winning probability. For example, let $\mathbf{A}$ be an algorithm that runs in polynomial time $p(\cdot)$  for every input $x \in \{0,1\}^{*}$. The computation of $\mathbf{A}(x)$ is then within at most $p(|x|)$ steps. Note the probabilistic algorithm is in the cryptographic system can be viewed as having a capability of tossing coins. Tossing coins means that the algorithm $\mathbf{A}$ has access to the random oracle such that each time of tossing coins will independently equal to 1 with probability $1/2$ and to 0 with probability $1/2$.

\subsection{Generative Adversarial Networks}
Generative Adversarial Networks (GANs)~\citep{goodfellow2014generative} are designed to solve other generative models by introducing a new concept of adversarial learning between a generator and a discriminator instead of maximizing a likelihood. The generator produces real-like samples by transformation function mapping a prior distribution from latent space into the data space. The discriminator acts as an adversary to distinguish whether samples generates from the generator derive from the real data distribution.
Notably, the application of GANs has extended to various fields of studies. 
For example, GANs have recently been employed in cryptography and steganography~\citep{baluja2017hiding}.

\subsection{Differential Privacy}
DP is a privacy-preserving model~\citep{dp_survey} that guarantees to protect individuals from privacy loss from an adversary. Intuitively, DP promises that the probability of harm can be minimized by adding noise to the output as follows:
\begin{align}
M(\mathcal{D}) = f(\mathcal{D}) + \tau
\end{align}
where $M: \mathcal{D} \rightarrow \mathbb{R}$ is a random function that takes a noise $\tau$ to the output, $\mathcal{D}$ is the target database, and $f$ is the deterministic original-query response.

\begin{definition}
(Differential Privacy). A random algorithm $M$ with domain $\mathbb{N}^{|D|}$ is $\delta$-differentially private if for all $S \in$ Range$(M)$ and for all $D,\hat{D} \in \mathbb{N}^{|D|}$ such that $D - \hat{D} \leq 1$:
\begin{align}
\textrm{Pr}[M(\mathcal{D}) \in S] \leq \exp ({\delta}) \textrm{Pr}[M(\mathcal{D}')\in S]
\label{eq:dp_def}
\end{align}
where $\mathcal{D}$ and $\mathcal{D}'$ are the absolute value of the privacy loss that is bounded by $\delta$ with probability at least $1-\delta$~\citep{dwork2014algorithmic}.
\end{definition}

\subsection{Notations}

The notations used in this paper are as follows:

\begin{itemize}[leftmargin=5.5mm, labelindent=7.5mm,labelsep=3.3mm]

\item $x$ is the input.

\item $\hat{x}$ is the anonymized output given input $x$.

\item $E$ is an encryption function. $E$ takes input $x$ and returns the encrypted output $E(x) \rightarrow \hat{x}$. 

\item $F$ is a random generator. $F$ takes a seed $k$ and returns a random output string $F(k) \rightarrow r$

\item $\mathbf{M}$ is a trained model.

\item $y$ is an output \textcolor{\revisionTwo}{score} given by the trained model $\mathbf{M}(x) \rightarrow \textcolor{\revisionTwo}{y}$ given input $x$.

\item $\hat{y}$ is an output score given by the trained model $\mathbf{M}(\hat{x}) \rightarrow \textcolor{\revisionTwo}{\hat{y}}$ given input $\hat{x}$.

\item $\mathcal{A}$ is a probabilistic polynomial-time adversary. The adversary is an attacker that queries input to the oracle model.

\item $\epsilon$ is the standard deviation value of \textcolor{\revisionTwo}{score} $y$.

\item $\delta$ is a privacy parameter that controls confidence levels.

\end{itemize}

\subsection{Threat Model}\label{threat_model}
A random oracle model~\citep{canetti2004random} posits the randomly chosen function $H$, which can be evaluated only by \textcolor{\editageTwo}{querying} an oracle that returns $H(x)$ given input $x$.
The security of the random oracle is based on an \textcolor{\revisionTwo}{\emph{experiment}} involving an adversary $\mathcal{A}$, \textcolor{\editageTwo}{as well as} $\mathcal{A}$'s indistinguishability of the encryption.
Assume that we have the random oracle that acts like a current anonymous scheme $E$ with only a negligible success probability.

The \emph{experiment} can be defined for any encryption scheme $E$ over input space $\textbf{X}$ and for adversary $\mathcal{A}$. The \emph{experiment} is defined as follows:

\begin{enumerate}  [leftmargin=*,label=(\roman*),labelindent=1.5mm,labelsep=1.3mm]
    \item The random oracle chooses a random anonymous scheme $E$.
    Scheme $E$ modifies or extends \textcolor{\editage}{the process of} mapping a medical data $x$ \textcolor{\editageTwo}{with} length $n$ to transformed medical medical data $\hat{x}$ as the output. 
    \textcolor{\editageTwo}{The} process of mapping sequences can be \textcolor{\editageTwo}{considered} as a table \textcolor{\editageTwo}{that} indicates for each possible input $x$ the corresponding output value $\hat{x}$.

    \item Adversary $\mathcal{A}$ then chooses a pair of medical data \textcolor{\revisionThree}{ $x_0,x_1$ }.

    \item The random oracle \textcolor{\editageTwo}{selects} a bit \textcolor{\revisionThree}{ $b \in \{0,1\}$} and sends encrypted medical data $ E(x_b) \textcolor{\revisionTwo}{\rightarrow} \hat{x}$ to the adversary.

    \item The adversary outputs a bit $b'$.

    \item The output of the \emph{experiment} is defined as 1 if $b' = b$, and 0 otherwise. $\mathcal{A}$ succeeds in the \emph{experiment} in the case of distinguishing $x_b$.

\end{enumerate}

\begin{figure*}
\centering
\includegraphics[width=0.875\textwidth]{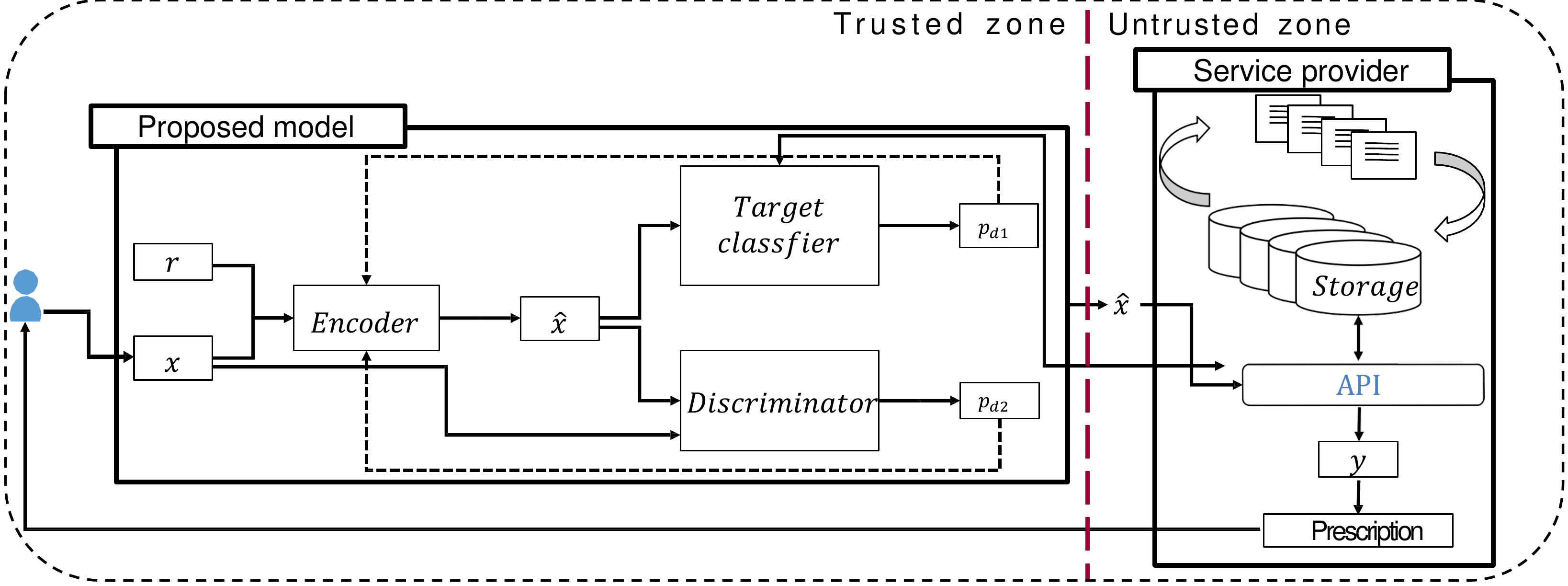}
\caption{Architecture of the model presented in this study.}
\label{fig:architecture}
\end{figure*}

With the \emph{experiment}, the definition of perfect security for $E$ takes the following general form\textcolor{\revisionTwo}{~\citep{canetti2004random}}:
\begin{definition}
The scheme $E$ is perfectly secure over input space $\textbf{X}$ if for every adversary $\mathcal{A}$ it satisfies \begin{equation}
    \text{Pr}[\text{experiment \ of \ success}] = {1 \over 2}.
\end{equation}
\label{perfectDef}
\end{definition}
In the encryption scheme $E$, $\mathcal{A}$ cannot distinguish between \textcolor{\revisionThree}{$x_0$ and $x_1$}. \textcolor{\editageTwo}{Furthermore,} $\mathcal{A}$ obtains no information about the presence of a hidden message.
In the real world, most systems do not have access to a random oracle. Thus, pseudorandom function\textcolor{\editageTwo}{s are} typically applied by replacing the random function\textcolor{\revisionTwo}{~\citep{canetti2004random}}.
With \textcolor{\editageTwo}{this} assumption, the oracle is replaced by a fixed encryption scheme $E$, which corresponds to the transformation of a real system (implementation of the encryption scheme).
The implementation of a random oracle is deemed secure if the \textcolor{\editageTwo}{probability of} the success of a random oracle \textcolor{\editage}{attack} is negligible. Moreover, the encryption scheme $E$ is soundness secure if adversary $\mathcal{A}$ has a $success$ probability such that

\begin{equation}
\text{Pr}[\text{success}] \leq {1 \over 2} + \epsilon.
\label{win_equation}
\end{equation}

\end{methods}

\begin{methods}
    \section{Methods}\label{method}
    \begin{figure*}
\centering
\includegraphics[width=0.95\textwidth]{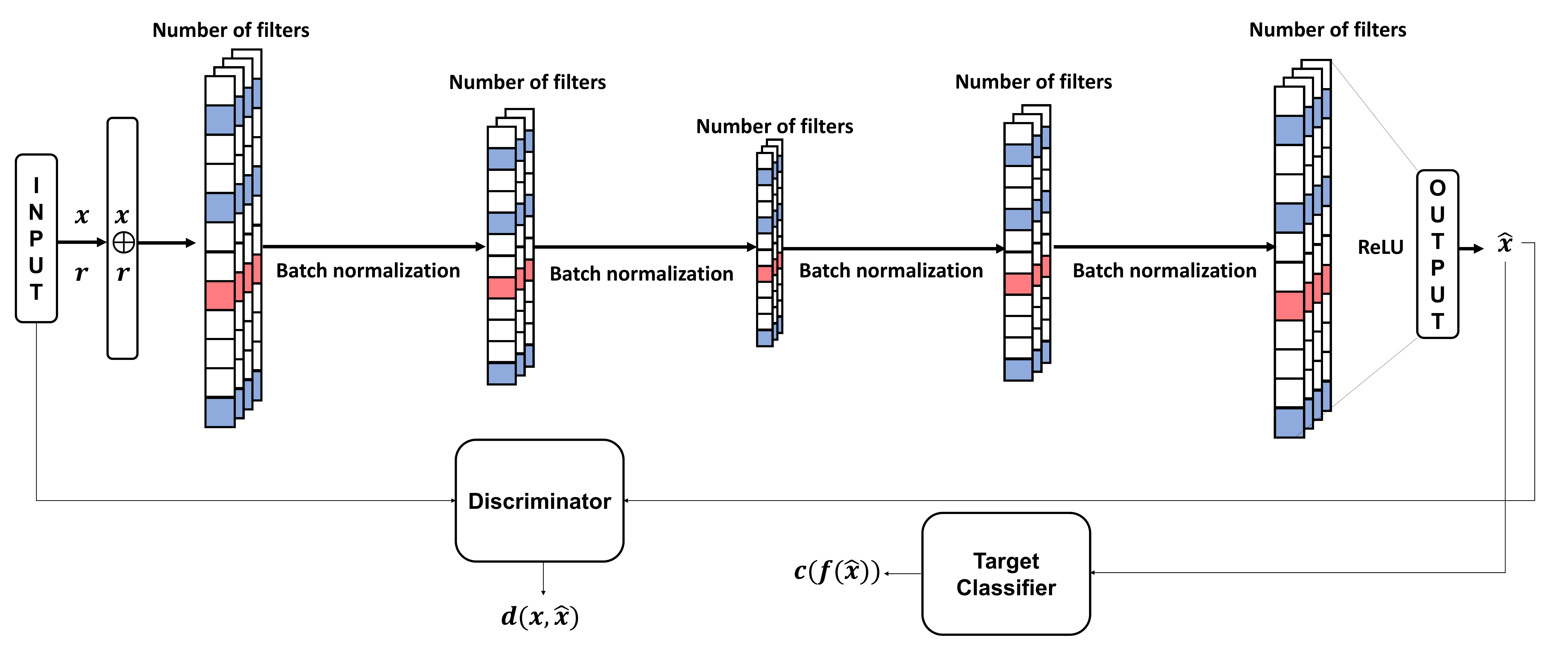}
\caption{Model training. The encoder accepts $x$ and $r$ as input and that are fed into the neural network. The discriminator takes an original input and output of the encoder to output probabilities from the logits (last fully connected layer). The target classifier takes an input $\hat{x}$ and outputs the prediction score. }
\label{fig:encoder}
\end{figure*}

Our training involves three parties: an encoder, a discriminator, and a target classifier (pre-trained discriminator). The encoder generates synthetic data close to the form of input data, and the target classifier places a prediction score of a synthetic data generated by the encoder. The discriminator then generates a confidence score of whether a piece of data is a synthetic or original data. The encoder is trained with random noise to learn to generate a synthetic data such that the prediction result of a synthetic data from a target model is same as the original data.

\subsection{Anonymization using GANs}
The architecture of this model is illustrated in \figurename~\ref{fig:architecture}. 
The encoder takes an input $x$ and gives the output $\hat{x}$, which are given to both the discriminator and the target classifier. The discriminator outputs the probability that $x = \hat{x}$, given input $\hat{x}$. The target classifier outputs a prediction score given input $\hat{x}$. The learning objective of the encoder is to minimize discriminator's probability to $1 / 2$ and maximize the prediction score of the target classifier.

The encoder accepts the $n$ length of messages as input and generates the $n$ length of random strings, that are fed into the neural network. The first layer consists of 64 filters with a length of 4; The second layer consists of 32 filters with a length of 2; The third layer consists of 16 filters with a length of 2. The fourth layer consists of 8 filters with a length of 2. Additional layers are then additional layers are added in the reverse order to the number of input length $n$. Batch normalization~\citep{ioffe2015batch} is used at each layer and tanh~\citep{lecun2012efficient} is used as the activation at each layer, except for the final layer where ReLU~\citep{nair2010rectified} is used for the activation function. The discriminator takes the output of the encoder as an input to determine whether the output is real or generated. A sigmoid activation function is used to output probabilities from the logits.

To define the learning objective, let $\theta_{E}$, $\theta_{D1}$, and $\theta_{D2}$ denote parameters of the encoder, discriminator and target classifier. Let $E(\theta_{E}, x, r, \delta)$ be the output on $x$, 
$D2(\theta_{D2}, \hat{x})$ be the output on $\hat{x}$, and $D1(\theta_{D1}, x, E(\theta_{E}, x, r, \delta), \theta_{D2})$ be the output on $x$ and $\hat{x}$.
Let $L_{E}, L_{D1}, L_{D2}$ denote the loss of encoder, discriminator, and target classifier.
Let $\lambda_{e}$ and $\lambda_{d}$ denote the weight parameters for the encoder, the discriminators. The parameter $\lambda_{e}$ can be used to control anonymization level. For strong anonymization level, the parameter $\lambda_{e}$ can be the minimum value.
The encoder then has following objective functions:
\begin{equation}
\begin{aligned}
         \mathrm{L}_{E}(\theta_{E}, x, r, \delta) & = \lambda_{e} \cdot d(x, E(\theta_{E}, x, r, \delta)) + \lambda_{d} \cdot ( \mathrm{L}_{D1} + \mathrm{L}_{D2}) \\
    & = \lambda_{e} \cdot (d(x, \hat{x}) + \delta) +  \lambda_{d} \cdot (\mathrm{L}_{D1} + \mathrm{L}_{D2})
\end{aligned}
\label{loss1_eq}
\end{equation}
where $d(x, \hat{x})$ is the Euclidean distance between synthetic and original data and $\delta$ controls confidence level. 
A discriminator has the sigmoid cross entropy loss of:
\begin{equation}
\begin{aligned}
    \mathrm{L}_{D1}(\theta_{E}, \theta_{D1}, \theta_{D2}, x, \hat{x}) = 
      &  -y \cdot \mathrm{log}(E(\theta_{E}, p)) \\
      &  - (1-y) \cdot \mathrm{log}(1 - E(\theta_{E}, p)),
\end{aligned}
\end{equation}
where $y=0$ if $p=\hat{x}$ and $y=1$ if $p=x$, where $p$ is the score of given input $x$ and $\hat{x}$.
A target classifier has a classification score of:
\begin{equation}
      \mathrm{L}_{D2}(\theta_{D2}, \hat{x}) = C(f(\hat{x})),
\end{equation}
where $C(f)$ is a cost function of pre-defined classifier.

\subsection{Security Principle of Anonymized GANs}\label{model_proof}
In this section, we show that the AnomiGAN has a scheme that is indistinguishable for an adversary. The AnomiGAN has a similar structure to the one-time pad encryption scheme, except that a probabilistic model is used to generate the output of the one-time pad. 
Since a probabilistic model generates a pseudorandom output that appears to be random any polynomial-time adversary, the AnomiGAN can be demonstrated as computationally-secure.
A simple intuition of the indistinguishable scheme is that the adversary is allowed to choose from multiple data from the synthesized data. An adversary can freely interact with an encryption oracle, which is regarded as a black-box that anonymize data chosen by the adversary. For the AnomiGAN, synthesized data can be viewed as encrypted data. 
Formally, we have constructed as input a seed $k \in \{0,1\}^n$ and a medical record $x \in \{0,1\}^n$, where $r \leftarrow \{0,1\}^n$ is chosen uniformly at random and outputs the synthesized data.
We define our anonymization scheme as follows for medical record length $n$:

\begin{itemize}[leftmargin=5.5mm, labelindent=7.5mm,labelsep=3.3mm]

\item Select $k \leftarrow \{0,1\}^n$ and outputs the seed.

\item On input a seed $k$, pseudorandom generator $F$ outputs a random string $F(k) \rightarrow r$.

\item AnomiGAN: on the input a random string $r$, and a medical record $x$, the model $\mathbf{M}$ output the synthesized medical record
\begin{equation}
    \hat{x} = \mathbf{M}(r \oplus x).
\end{equation}
\end{itemize}
A random $r$ is replaced for each learning steps, and the variances of each layer are stored during the learning process. The variance of each layer is added to randomly selected layers in inference time to ensure that the generator does not produce the same output from the same input.
Intuitively, a generative model is a probabilistic model; thus $\mathbf{M}(r \oplus x)$ appears completely random to an adversary who observes a medical record $\hat{x}$ given $x$ operating similarly to the one-time pad. Note that similar operations can be expected upon the replacement of $ \oplus $ to $(r \times x)$ mod $n$.
\begin{theorem}
If $F(k)$ is a pseudorandom generator and $\mathbf{M}$ is a probabilistic model, then $\mathbf{M}$ has a scheme that is indistinguishable to an adversary.
\end{theorem}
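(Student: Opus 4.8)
The plan is to follow the standard hybrid-argument template used to show that the pseudo-one-time-pad is EAV-secure, adapted to the fact that $\mathbf{M}$ contributes its own internal randomness. I would argue in two stages: first replace the pseudorandom pad $F(k)$ by a truly uniform string and invoke perfect security (Definition~\ref{perfectDef}); then bound the gap between the two worlds by the distinguishing advantage against the pseudorandom generator $F$.

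First I would define the ``idealized'' scheme $\widetilde{\mathbf{M}}$ in which the pad is $r \leftarrow \{0,1\}^n$ chosen uniformly rather than $r = F(k)$. Because $x \in \{0,1\}^n$ and $r$ is uniform and independent of $x$, the string $r \oplus x$ is uniformly distributed on $\{0,1\}^n$ for every fixed $x$ (the one-time-pad property). Hence the distribution of $\widetilde{\mathbf{M}}(r \oplus x)$ --- taken over the choice of $r$ and over $\mathbf{M}$'s internal coins, i.e.\ the per-layer variances added to the randomly selected layers --- is identical for $x_0$ and $x_1$. Therefore, in the \emph{experiment} of Section~\ref{threat_model} the ciphertext $\hat{x}$ carries no information about the bit $b$, and any adversary $\mathcal{A}$ satisfies $\Pr[\text{experiment of success}] = 1/2$ against $\widetilde{\mathbf{M}}$; that is, $\widetilde{\mathbf{M}}$ meets Definition~\ref{perfectDef}.

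Next I would show that no probabilistic polynomial-time $\mathcal{A}$ can do noticeably better against the real scheme $\mathbf{M}$. Suppose, for contradiction, that $\mathcal{A}$ wins the \emph{experiment} against $\mathbf{M}$ with probability $1/2 + \epsilon(n)$ for some non-negligible $\epsilon$. I build a distinguisher $\mathcal{B}$ for $F$: on input a string $w \in \{0,1\}^n$ (either $w = F(k)$ for a uniform seed $k$, or $w$ uniform), $\mathcal{B}$ runs the \emph{experiment} with $\mathcal{A}$, using $w$ in place of the pad --- it receives $x_0, x_1$ from $\mathcal{A}$, picks $b \leftarrow \{0,1\}$, samples $\mathbf{M}$'s internal coins, returns $\hat{x} = \mathbf{M}(w \oplus x_b)$, and outputs $1$ iff $\mathcal{A}$'s guess $b'$ equals $b$. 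When $w = F(k)$, $\mathcal{B}$ perfectly simulates the real experiment, so $\Pr[\mathcal{B} = 1] = 1/2 + \epsilon(n)$; when $w$ is uniform, $\mathcal{B}$ simulates the idealized experiment, so $\Pr[\mathcal{B} = 1] = 1/2$. Thus $\mathcal{B}$ distinguishes $F(k)$ from uniform with advantage $\epsilon(n)$, contradicting the assumption that $F$ is a pseudorandom generator. Hence $\epsilon(n)$ is negligible and $\Pr[\text{success}] \leq 1/2 + \epsilon$ as in~\eqref{win_equation}, which is precisely the indistinguishability guarantee claimed.

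The step I expect to be the main obstacle --- and the one I would be most careful about --- is ensuring the reduction $\mathcal{B}$ runs in polynomial time, which requires $\mathbf{M}$ and the sampling of its stored per-layer variances to be polynomial-time computable, and ensuring that $\mathbf{M}$'s internal randomness is genuinely independent of the pad so that $\mathcal{B}$ can resample it without changing the distribution. A secondary subtlety is that the one-time-pad argument needs $\oplus$ acting on equal-length strings (or, for the multiplicative variant $(r \times x) \bmod n$ mentioned in the text, that multiplication by the relevant element is a bijection), so I would state this length/invertibility assumption explicitly rather than leave it implicit.
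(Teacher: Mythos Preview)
Your proposal is correct and takes essentially the same approach as the paper: establish perfect security for the idealized scheme with a truly uniform pad via the one-time-pad argument (Definition~\ref{perfectDef}), then reduce any non-negligible advantage against the real scheme to a distinguisher for the pseudorandom generator $F$. Your version is in fact more explicit --- you spell out the reduction $\mathcal{B}$ and note the need for $\mathbf{M}$'s internal randomness to be independently samplable --- but the underlying hybrid/reduction template is the same as the paper's.
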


\begin{proof}
The rationale for the proof is that if $\mathbf{M}$ is a probabilistic and the $F(k)$ is the pseudorandom generator; then the resulting scheme is identical to the one-time pad encryption scheme hold a Definition~\ref{perfectDef}.
We know that $r$ has the same length of $x$ and the output length of $\hat{x}$ is also equal to both $r$ and $x$. 
The same length of $r$, $x$, and $\hat{x}$ with the operation of $r \oplus x$ is identical to the one-time pad encryption scheme, which has a formal Definition~\ref{perfectDef}. Let polynomial-time adversary $\mathcal{A}$ constructs a distinguisher for $F(k)$ such that $\mathcal{A}$ has the success probability as defined in Equation~\ref{win_equation}. The distinguisher is given an input $x$, and the goal is to determine whether $x$ is a truly random or $x$ is generated by $\mathbf{M}$. The distinguisher emulates the $experiment$ described in Section~\ref{threat_model}, and the distinguisher has two observations.
If input $x$ is truly random, then the distinguisher has a success probability of:
\begin{equation}
   \text{Pr}[{F(k)} \text{\ of \ success}] = {1 \over 2},
\end{equation}
which follows Equation~\ref{perfectDef}.
If input $x$ is equal to $ c = \mathbf{M}(F(k) \oplus x) $, where $k$ is chosen at uniformly random, then the distinguisher has a success probability of:
\begin{equation}
\text{Pr}[\mathbf{M}(F(k) \oplus x) \text{\ of \ success}] \leq {1 \over 2} + \epsilon.
\label{win_equation}
\end{equation}
By the assumption that $F(k)$ is a pseudorandom generator and $\mathbf{M}$ is a probabilistic, then $\epsilon$ must be negligible. \qed
\end{proof}

\end{methods}

\begin{methods}
    \section{Results}\label{results}

\begin{figure*}
\centering
\includegraphics[width=0.775\textwidth]{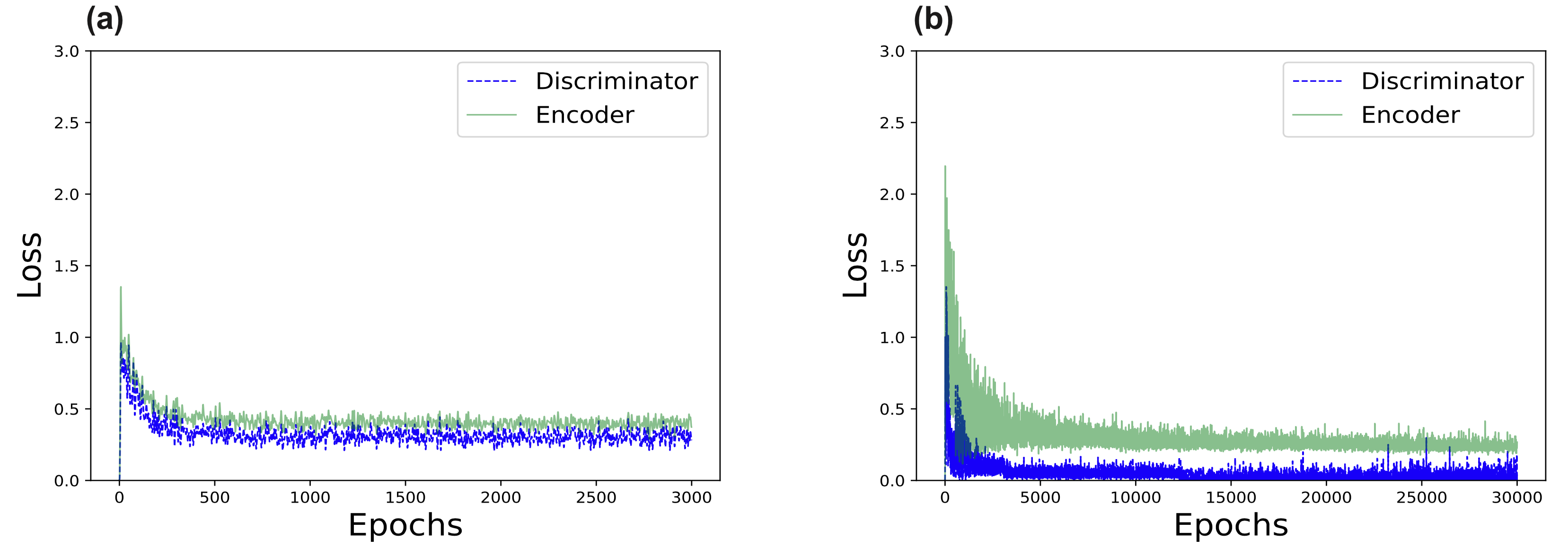}
\caption{Learning performance of AnomiGAN.
(a) Loss of encoder and discriminator for breast cancer dataset.
(b) Loss of encoder and discriminator for chronic kidney disease dataset.
}
\label{fig:boss_loss}
\end{figure*}

\subsection{Experimental Environment}
We performed experiments using Ubuntu 14.04 (3.5GHz Intel i7-5930K and GTX Titan X Maxwell(12GB)). For the implementation, we exploited the Scikit-learn library package (version 0.18) for convolutional neural networks, the Keras library package (version 2.0.6) for neural networks, tensorflow (1.11.0) for generative adversarial networks, and bio-python (1.72) used for input representation.

\subsection{Datasets}

We simulated our approach using the Wisconsin breast cancer dataset from the UCI machine learning repository~\citep{blake1998uci}, and the chronic kidney disease dataset from the UCI machine learning repository~\citep{rubini2015generating}. The Wisconsin breast cancer and chronic kidney disease datasets consist of 30 and 24 features, respectively. The datasets are randomly partitioned to training and test sets of 90\% and 10\%, respectively.

\subsection{Target Classifiers}\label{target_classifer}
Many services are incorporate disease classifiers using machine learning techniques. For our experiment, we selected breast cancer and chronic kidney disease model from the kaggle competitions as the target classifiers. The classifiers will be used as black-box access to our target classifier in our method. We selected these classifiers for two reasons: a) both classifiers achieve high accuracy in disease detection to their testing datasets, and b) these classifiers are open source implementations, which allows easily accessible as our target classifier.

\subsection{Model Training}
For the training model, we used optimizer of multi-class logarithmic loss function Adam \citep{kingma2014adam} with a learning rate of 0.001, a beta rate of 0.5, the epoch of 50000, and mini-batch size of 10.
The objective function $\mathcal{L}_E$ that must to be minimized as described in \textcolor{\revisionThree}{Eq~(\ref{loss1_eq})}. 
Most of these parameters and networks structure were experimentally determined to achieve optimal performance. \figurename~\ref{fig:boss_loss} show the training loss of each model. 
For the breast cancer dataset, the discriminator achieves the optimal loss after 3000 steps, while the encoder requires more steps to generate original data like the sample. For the chronic kidney disease dataset, the optimal loss achieved after 10,000 steps.

\subsection{Evaluation Process}
We exploited DP, in particular, the Laplacian mechanism~\citep{dwork2013toward}, to compare the anonymization performance against corresponding accuracy and AUC. For the evaluation metric, the accuracy\footnote{$\textrm{Accuracy} = (TP+TN) / (TP+TN+FP+FN)$, where $TP$, $FP$, $FN$, and $TN$ represent the numbers of true positives, false positives, false negatives, and true negatives, respectively.} and the AUC are used to measure performance between original samples and anonymized samples according to the model's parameter changes. The correlation coefficient is used to measure the linear relationship between the original samples and anonymized samples by changing the privacy parameters. We generated the anonymized data according to privacy parameter $\delta$ and $\lambda_e$ by randomly selecting 1000 cases, and obtained the average prediction of accuracy, AUC, and correlation coefficient against corresponding original data. 
In the next step, we fixed test data and generated anonymized data to validate the probabilistic behavior of our model. As shown in Table~\ref{table:cc_layers}, a variance of each encoder layers is added to the corresponding encoder layers in the inference time. The process was repeated 1000 times with the fixed test data.

\begin{figure*}
\centering
\includegraphics[width=1\textwidth]{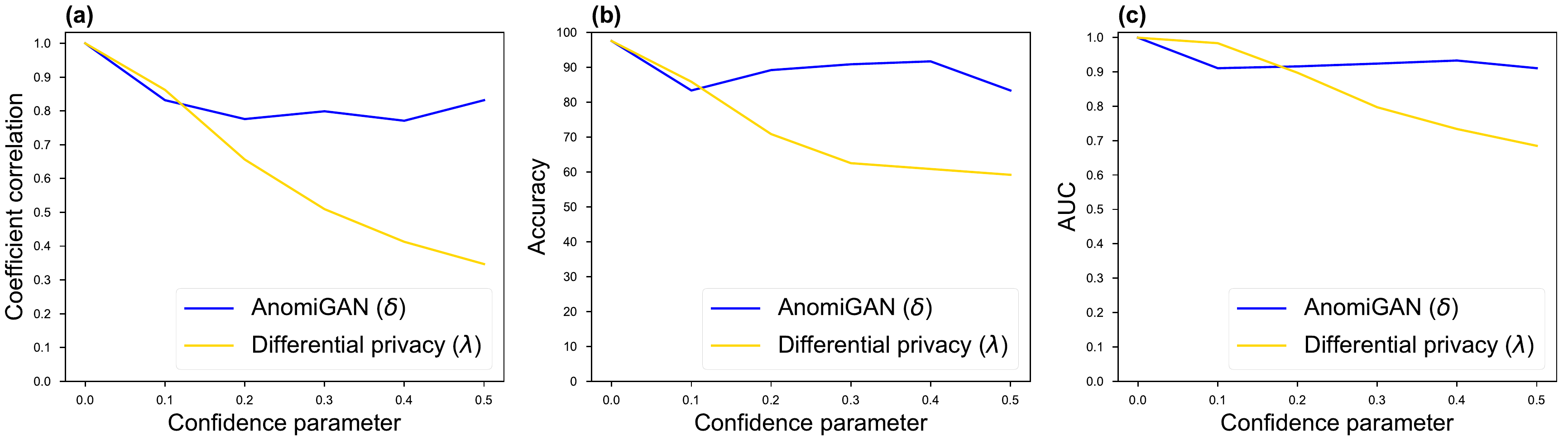}
\caption{Anonymization performance using a breast cancer dataset: a fixed test dataset was selected from the UCI machine learning repository. The correlation coefficient, accuracy, and AUC were measured by changing 0.1 of the privacy parameter for fixed test data,  $\delta$.}
\label{fig:results_delta_breast}
\end{figure*}

\begin{figure*}
\centering
\includegraphics[width=1\textwidth]{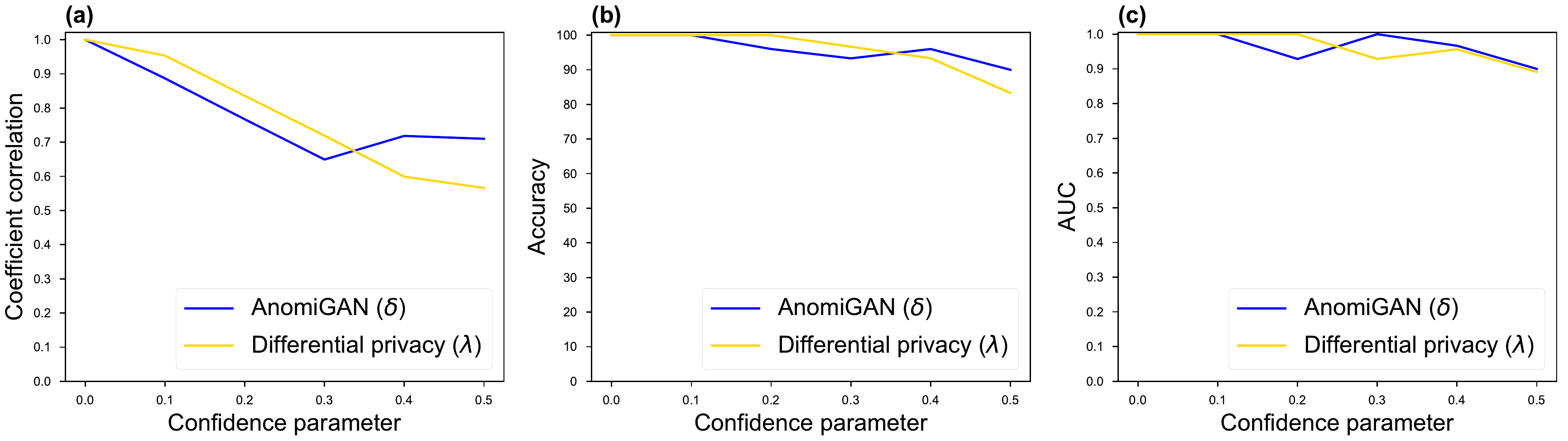}
\caption{Anonymization performance using a chronic kidney dataset: a fixed test dataset was selected from the UCI machine learning repository. The correlation coefficient, accuracy, and AUC were measured by changing 0.1 of the privacy parameter for fixed test data,  $\delta$.}
\label{fig:results_delta_kidney}
\end{figure*}

\subsection{Comparison to Differential Privacy}
DP achieves plausible privacy by adding Laplacian noise $\mathbf{Lap}(\lambda) = \mathbf{Lap}(S / \delta)$ to a statistics~\citep{dwork2013toward}.
The parameter $\lambda =.1$ has a minimal effect on privacy and the risk of privacy increases as the parameter $\lambda$ increases. The amount of noise presents a trade-off between accuracy and privacy. Note that standard DP of unbounded noise version of Laplacian was applied for the experiments. 

\figurename~\ref{fig:results_delta_breast} shows an experiment for our proposed algorithm and DP algorithm using a fixed breast cancer tests. The experiments were conducted by increasing 0.1 of the parameter $\lambda$. DP showed good performance in the coefficient correlation but showed a significant drop in both accuracy and AUC. However, our proposed methodology showed that coefficient correlation does drops slowly, but maintains good performance in accuracy and AUC.

\figurename~\ref{fig:results_delta_kidney} shows an experiment for our proposed algorithm and DP algorithm with respect to kidney disease tests. Similar behavior was observed for the breast cancer dataset, but the coefficient correlation did not degrade after $\lambda=0.3$. In the case of AnomiGAN, we noticed from \figurename~\ref{fig:results_delta_breast} that coefficient correlation only drops until certain level due to additional loss from discriminator as described in Equation~\ref{loss1_eq}.

\figurename~\ref{fig:heatmap_both} details features and its correlation.
\figurename~\ref{fig:heatmap_both} (a) shows the feature correlation of breast cancer, and \figurename~\ref{fig:heatmap_both} (2) shows the feature correlation of chronic kidney disease. In the case of DP, we noticed that trade-off between privacy and accuracy are more strongly correlated if the features are strongly correlated with each other.

\begin{figure*}
\centering
\includegraphics[width=1\textwidth]{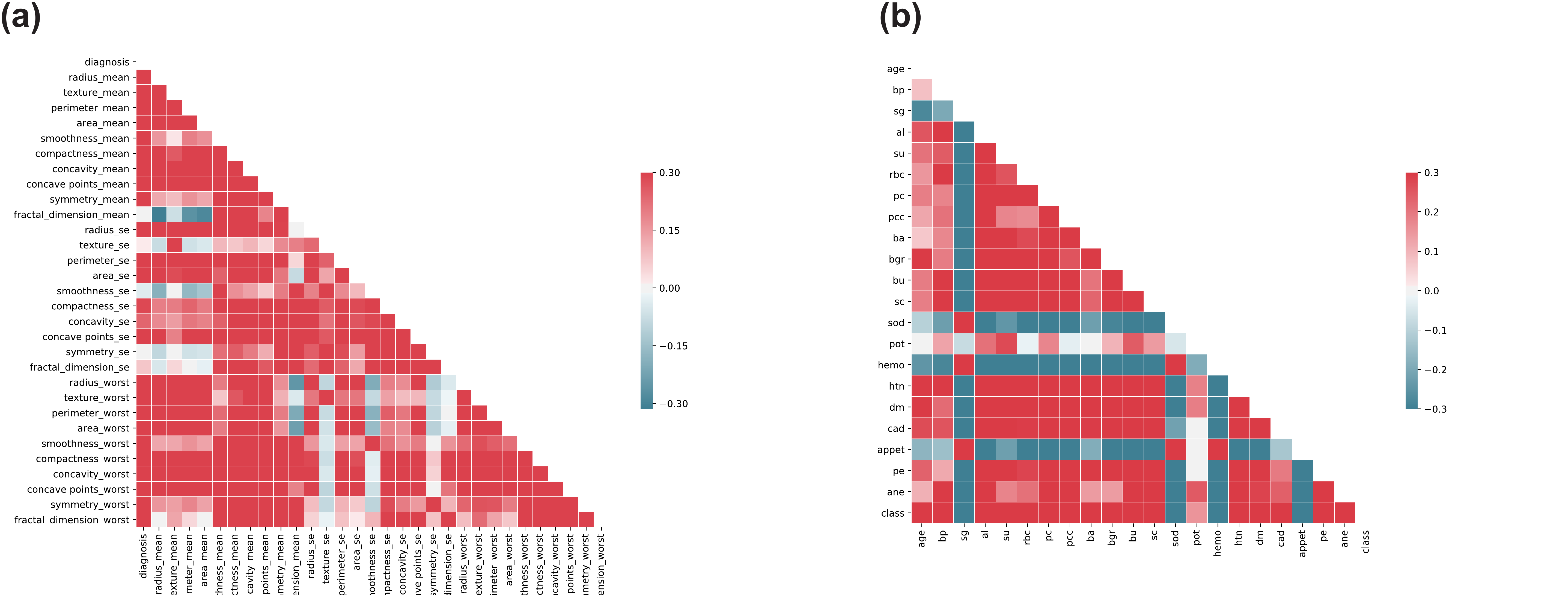}
\caption{Feature correlation (best viewed in color).
Correlations between features of breast cancer (a) and chronic kidney disease datasets (b). 
}
\label{fig:heatmap_both}
\end{figure*}

\begin{figure*}
\centering
\includegraphics[width=1\textwidth]{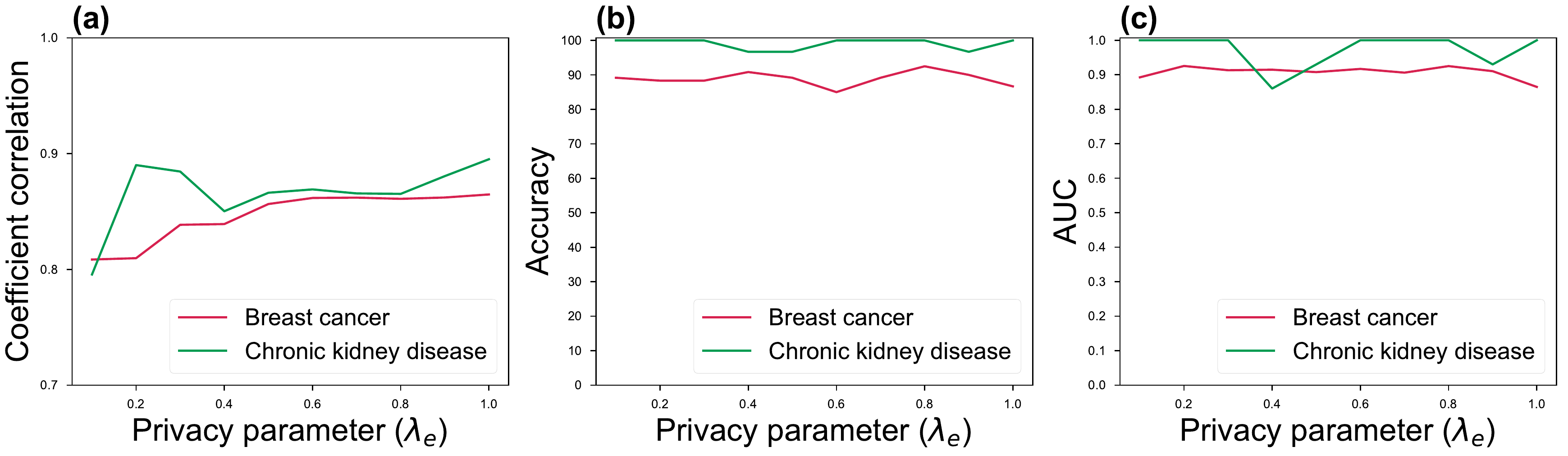}
\caption{Comparison of privacy parameter $\lambda_e$.
The correlation coefficient, accuracy, and AUC are measured by changing 0.1 of the privacy parameter  $\lambda_e$ for fixed test data.}
\label{fig:results_by_lambda}
\end{figure*}

\begin{spacing}{\mytablespacing}
\setlength{\tabcolsep}{2pt}
\ctable[
    caption = {Performance results of the model upon adding variance to the layer.},
    label = {table:cc_layers},
    doinside = \scriptsize,
    width = \columnwidth
]
{lccccccc}
{}
{
   \toprule
       & Layer1 & Layer2 & Layer3 & Layer4 & Layer5 & Layer6 & Layer7 \\      
     
    \midrule
     \textbf{(Breast cancer)} \\
     Correlation coefficient & 0.88& 0.86 & 0.87 & 0.83 & 0.83 & 0.86 & 0.86 \\
     Accuracy (\%) & 88.00 & 90.83 & 88.33 & 91.67 & 85.83 & 91.67 & 90.00 \\
     ACU & 0.864 & 0.916 & 0.858 & 0.906 & 0.931 & 0.921 & 0.906 \\
     
     \hline
     \textbf{(Chronic kidney disease)} \\
     Correlation coefficient & 0.86 &0.89 & 0.88 & 0.89 & 0.88 & 0.89 & 0.86 \\
     Accuracy (\%) & 96.70 & 100 & 100 & 100 & 100 & 100 & 100 \\
     ACU & 1.000 & 1.000 & 1.000 & 1.000 & 1.000 & 1.000 & 1.000 \\
    \bottomrule
}
\end{spacing}

\subsection{Performance Comparison}
We evaluated the performance of our proposed method based on two classifiers (breast cancer, chronic kidney disease) to measure prediction performance and coefficient values between original data and anonymized data. The experiments were conducted by changing 0.1 of the privacy parameters $\lambda_e$, and $\delta$.~\figurename~\ref{fig:results_by_lambda} shows an experiment for privacy parameter $\lambda_e$. 
The term $\lambda_e$ is directly associated with the Euclidean distance indicating that the privacy level should be decreased as $\lambda_e$ increases. As shown in \figurename~\ref{fig:results_by_lambda} (a) the coefficient correlation value indicates that strong association between original and anonymized data as the parameter $\lambda_e$ is increased for both datasets. \figurename~\ref{fig:results_by_lambda} (b) and (c) indicate that the averaged accuracy and AUC does not degrade as the parameter $\lambda_e$ is increased. This is expected behavior, as $\lambda_d$ is proportional to $\lambda_e$, maximizing the discriminator loss to a target classifier.

To validate the probabilistic behavior of our model with respect to variance added for each layer, we measured the mean of the coefficient correlation for each of the 7 encoder layers as shown in Table~\ref{table:cc_layers}. The results indicate that adding variance to each of layers has a  difference in the coefficient correlation with limited effects on both accuracy and AUC. 

Table~\ref{table:computation_time} shows the training and running time for each datasets. The training time varied depending on the hyperparameters. The training time was measured with the optimal hyperparameters. The loss weight parameters of $\lambda_e$ and $\lambda_d$ are set to 0.5, and the privacy parameter $\delta$ was set to 0.3. The training time includes the operation of measuring the variance of each layer which is then used at running time.


\begin{spacing}{\mytablespacing}
\setlength{\tabcolsep}{2pt}
\ctable[
    caption = {Training and running time of the proposed method based for two classifiers.},
    label = {table:computation_time},
    doinside = \scriptsize,
    width = \columnwidth
]
{lccc}
{}
{
   \toprule
      Dataset & No. of features & Training Time (Hours) & Running Time (Secs)\\      
     
    \midrule
     Breast cancer & 30 & 2 $\leq$ & 1.00 $\leq$  \\
     Chronic kidney disease & 24 &  1 $\leq$ & 1.00 $\leq$  \\

    \bottomrule
}
\end{spacing}

\end{methods}

\begin{methods}
    \section{Discussion}

Here, we have introduced a novel approach for anonymizing private data while preserving the original prediction accuracy. We showed that under a certain level of privacy parameters, our approach preserves privacy while maintaining a better performance of accuracy and AUC compared to the differential privacy. Moreover, we provide a mathematical overview showing that our model is secure against an efficient adversary, demonstrating that the estimated behavior of the model, and finally evaluated the performance compared to the state-of-the-art privacy preserving method.

One of our primary motivations for this study was that many companies are providing new services based on deep neural networks, and we believe this will extend to online medical services. Potential risks regarding the security of medical information (including genomic data) are higher compared to the current risks to private information security, as demonstrated by Facebook's recent privacy scandal. In addition, it is difficult to notice a privacy breach even when there are privacy policies in place. For example, when a patient consents to the use of medical diagnostic techniques, the propagation of that information to a third party cannot guarantee that the same privacy policies will be adhered to by them. Finally, machine learning as a service (MLaaS) is mostly provided by Google, Microsoft, or Amazon due to hardware constraints, and it is even more challenging to maintain user data privacy when using such services.

Exploiting traditional security in the deep learning requires encryption and decryption phases, which make its use impractical in the real world due to a vast amount of computation complexity. As a result, other privacy preserving techniques such as DP will be exploited in deep learning. Towards this objective, we developed a new approach of privacy-preserving technique based on deep learning. Our method is not limited to the medical data. Our framework can be extended in many various ways to the concept of exploiting a target classifier as a discriminator.

Unlike a statistics-based approach, our method does not require a background population to achieve good prediction results. AnomiGAN also provides the ability to share data while minimizing privacy risks. We believe that online medical services using the deep neural network technology will be available in our daily lives, and it will no longer be possible to overlook issues regarding the privacy of medical data. We believe that our methodology will encourage the anonymization of personal medical data. As part of future studies, we plan to extend our model to genomic data. The continuous investigation of privacy in medical data will benefit human health and enable the development of various diagnostic tools for early disease detection.
\end{methods}

\bibliographystyle{natbib}
\bibliography{references}
		
\end{document}